    \newtheorem{definition}{Definition}  \newtheorem{assumption}{Assumption}
\begin{document}

\pagestyle{plain}

\title{Online Maximizing Weighted Throughput In A Fading Channel\thanks{Research is partially supported by Seed Award from the Office of the Vice President for Research and Economic Development at George Mason University.}}

\author{Fei Li \ \ \ Zhi Zhang \\ Department of Computer Science \\ George Mason University \\ Email: {\tt \{lifei, zzhang8\}@cs.gmu.edu.}}

\maketitle

%------------------------------------------------------------------------------

\begin{abstract}
We consider online scheduling weighted packets with time constraints over a fading channel. Packets arrive at the transmitter in an online manner. Each packet has a value and a deadline by which it should be sent. The fade state of the channel determines the throughput obtained per unit of time and the channel's quality may change over time. In this paper,  we design online algorithms to maximize weighted throughput, defined as the total value of the packets sent by their respective deadlines. Competitive ratio is employed to measure an online algorithm's performance. For this problem and one of its variants, we present two online algorithms with competitive ratios $2.618$ and $2$ respectively.
\end{abstract}

%------------------------------------------------------------------------------

\section{Introduction}

Time-varying signal strength is a fundamental characteristic of wireless channels. Scheduling packets over fading wireless channels has received much attention (see~\cite{TH98, FMT06, TSZM07, ZM08, CNM08} and the references therein). A scheduling algorithm can significantly improve the communication performance by taking advantages of the changing channel states. In this paper, we consider scheduling weighted packets with time constraints in an online manner.

Resource allocation for fading channels has been a well-studied topic in the area of information theory. The quantity to maximize is often the Shannon capacity, defined as the tightest upper bound of the amount of information (the total number of packets) that can be reliably transmitted over a communication channel. Tse and Hanly~\cite{TH98} have found capacity limits and optimal resource allocation policies for such fading channels. In~\cite{FMT06}, the authors applied a dynamic programming approach to get the optimal solution for scheduling uniform-value packets under both time and energy constraints. A polynomial-time optimal offline solution of scheduling packets with deadlines was given in~\cite{TSZM07, ZM08}. In their problem settings, energy is minimized under the assumption that all arriving packets are successfully delivered. An optimal offline algorithm maximizing throughput and a heuristic online approach of scheduling uniform-value packets with possibly different deadlines were given in~\cite{CNM08}. No theoretical analysis has been provided for the heuristic online solution. The first work considering scheduling weighted packets is~\cite{LZ09a}, in which an optimal offline algorithm is provided but theoretical analysis of the proposed online algorithm is missing.

Note that in previous studies, packets have uniform values (except in~\cite{LZ09a}) and their arrivals at the transmitter are usually modeled by a Poisson distribution. However, packets from different users and various applications may have different significance levels of embedded information. For the sake of being realistic and practical, we associate packets with {\em weights} ({\em values}) to indicate the significance of their embedded information. We also associate packets with deadlines to represent the information's time sensitivity. None of the previous algorithms for delivering packets can be generalized to this problem setting, because a schedule with the maximum throughput does not imply its optimality on maximizing weighted throughput. In this paper, we design competitive online algorithms to maximize weighted throughput for packets with time constraints over a fading channel.

%------------------------------------------------------------------------------

\section{Model}

We consider scheduling weighted packets with deadlines over a wireless fading channel in an online manner. In this model, time is assumed to be discrete. Each unit of time is called a {\em time step} and a few continuous time steps are called a {\em time interval}. Packets are released over time. All packets are with the same length $l \in \mathbb R^+$ ($l$ is a constant). Each packet $p$ has an integer {\em release time} ({\em arriving time}) $r_p \in \mathbb Z^+$, a positive real value $w_p \in \mathbb R^+$ to represent its {\em weight} ({\em value}), and an integer deadline $d_p \in \mathbb Z^+$ to denote the time by which it should be delivered. The time required to send a packet depends on the {\em state quality} $q_t$ ($q_t \in [q_{\min}, \ q_{\max}]$) of the fading channel during a time step $t$. For simplicity, we assume $q_{\min} = 0$ and $l = q_{\max}$ (if the fading channel is at its highest quality $q_{\max}$, one packet can be sent in a time step). Without loss of generality, we assume the fade state in a single time step keeps unchanged.  A packet has to be sent in consecutive time steps. Sending a packet $p$ takes $t(p)$ time steps subject to
\begin{displaymath}
\sum^{t_2}_{t = t_1} q_t \ge l, t_1, t_2 \in \mathbb Z^+,
\end{displaymath}
where $t(p) = t_2 - t_1$. Two or more packets cannot share (i.e., to be sent in) the same time step. If a packet $p$ is sent by its deadline $d_p$, its weight $w_p$ is contributed to our objective. Our goal is to maximize weighted throughput in an online manner subject to the deadline constraints of packets and the varying fading channel qualities. We note here that our model can be an {\em overloaded system} --- it is possible that due to packets' deadline constraints, no algorithm can deliver all packets in the input instance, some packets have to be dropped.

There are two kinds of algorithms: {\em offline algorithms} and {\em online algorithms}. All input information (including the fade channel states, the packets' characteristics, and the packet sequence) is known to an offline algorithm in advance. Our work on offline algorithms in scheduling weighted packets with deadlines can be found in~\cite{LZ09a}. For an online algorithm, the packet input sequence is unknown. Each packet $p$'s characteristics (such as $r_p, \ w_p, \ d_p$) is known to the algorithm only at the time $r_p$ when $p$ actually arrives at the transmitter. Under various assumptions on the variants of the online version of this problem, the fade state of the channel is either completely unknown or partially known to the online algorithm. Note that essentially, delivering packets with deadlines in a wireless channel is an online decision-making problem. We address our online problem in a {\em preemption-restart} setting:
\begin{definition}
{\bf Preemption-restart setting.} An online algorithm is allowed to abort a packet during its transmission, and the aborted packet can be restarted (from scratch) and completed later in a preemption-restart setting.
\end{definition}

%------------------------------------------------------------------------------

\section{Algorithms and Analysis}
\label{sec:online}

Scheduling packets with deadlines is essentially an online decision-making problem. In order to evaluate the worst-case performance of an online algorithm lacking of future input information, we compare it with an optimal offline algorithm. The offline algorithm is a clairvoyant algorithm, empowered to know the whole input sequence (including the fading states of the channel, the packet sequence, and every packet $p$'s characteristics $r_p, \ w_p, \ d_p$) in advance to make its decision. A competitive online algorithm, on the contrary, does not know the input sequence beforehand and it has a packet $p$'s characteristics only at the time $r_p$ when $p$ actually arrives. {\em In contrast to stochastic algorithms that provide statistical guarantees under some mild assumptions on input sequences, competitive online algorithms guarantee the worst-case performance.} Furthermore, when reasonable or reliable approximation of the input probability distribution is not available or when analytical worst-case performance guarantees are sought, competitive analysis is of fundamental research interests to us.

\begin{definition}
{\bf Competitive ratio}~\cite{BY98}. A deterministic online algorithm ${\tt ON}$ is called {\em $k$-competitive} if its weighted throughput on {\em any} finite instance $\cal I$ is at least $1 / k$ of the weighted throughput of an optimal offline algorithm on this instance:
\begin{displaymath}
k := \max_{\cal I} \frac{{\tt OPT}({\cal I}) - \delta}{{\tt ON}({\cal I})}
\end{displaymath}
where $\delta$ is a constant ($\delta$ becomes insignificant when the size of the input $|\cal I|$ increases) and ${\tt OPT}({\cal I})$ is the optimal offline solution of an input $\cal I$. The parameter $k$ is known as the online algorithm's {\em competitive ratio}. We also call the optimal offline algorithm {\em adversary}.
\end{definition}

The {\em upper bounds} of competitive ratios are achieved by some known online algorithms. A competitive ratio less than the {\em lower bound} is not reachable by any online algorithm. An online algorithm is said to be {\em optimal} if its competitive ratio reaches the lower bound. If the additive constant $\delta$ is no larger than $0$, the online algorithm ${\tt ON}$ is called {\em strictly $k$-competitive}. Competitiveness has been widely accepted as the metric to measure an online algorithm's worst-case performance in theoretical computer science and operations research~\cite{BY98}. In this section, we design and analyze competitive online scheduling algorithms in maximizing weighted throughput.

Without time constraints on packets, (weighted) throughput is maximized by simply delivering all packets that ever arrive at the transmitter. For packets with deadlines, when $q_t$ is a constant, an optimal competitive online algorithm can be achieved~\cite{CJST07a} such that the throughput (of uniform-value packets) is maximized. However, how to schedule packets with deadlines in fading channels still remains as an open problem and this problem becomes more interesting and complicated when packet weights are taken into account.

%------------------------------------------------------------------------------

\subsection{The lower bound of competitive ratio}

We indicate if the fading states are unknown to the online algorithms, no online algorithm can have a competitive ratio better than $w_{\max} / w_{\min}$.

\begin{theorem}
\cite{LZ09a} If the fading states are unknown to online algorithms, no online algorithm can have a competitive ratio better than $w_{\max} / w_{\min}$.
\label{theorem:counter}
\end{theorem}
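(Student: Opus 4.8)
The plan is a standard adversary argument: given an arbitrary deterministic online algorithm ${\tt ON}$ and a parameter $n$, I will construct an instance on which ${\tt OPT} \ge (w_{\max}/w_{\min})\cdot{\tt ON}$ up to an additive term that is killed by letting $n\to\infty$, which is exactly what the definition of competitive ratio (with its fixed constant $\delta$) requires. The instance is the concatenation of $n$ two-step ``gadgets'' placed on disjoint pairs of consecutive time steps; since every packet released inside a gadget has its deadline within that gadget's two steps, the gadgets do not interfere and ${\tt OPT}$ is additive over them, so it suffices to analyze a single gadget. The adversary is adaptive: it fixes future channel qualities and packet releases in response to ${\tt ON}$'s past decisions, which is legitimate since the resulting instance is a fixed, consistent input.

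In a gadget on steps $2i$ and $2i+1$ with common deadline $2i+2$: at step $2i$ the adversary releases a single packet $p$ of weight $w_{\min}$ (deadline $2i+2$) and nothing else. If ${\tt ON}$ idles at step $2i$, the adversary declares $q_{2i}=q_{\max}$, releases nothing more, and makes the channel dead thereafter; then ${\tt ON}$ earns $0$ while ${\tt OPT}$ earns $w_{\min}$ by sending $p$ at step $2i$. Otherwise ${\tt ON}$ is transmitting $p$; the adversary declares $q_{2i}=q_{\max}/2$, so $p$ is exactly half sent, and at step $2i+1$ releases a packet $q$ of weight $w_{\max}$ with deadline $2i+2$, so $q$ must be completed within this single remaining step. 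Not knowing $q_{2i+1}$, ${\tt ON}$ now faces a genuine dilemma: continuing $p$ finishes it whenever $q_{2i+1}\ge q_{\max}/2$, whereas aborting $p$ and starting $q$ finishes $q$ only if $q_{2i+1}=q_{\max}$. The adversary punishes whichever choice ${\tt ON}$ makes: if ${\tt ON}$ keeps transmitting $p$, set $q_{2i+1}=q_{\max}$, so ${\tt ON}$ earns $w_{\min}$ while ${\tt OPT}$ uses that full-quality step for $q$ (discarding $p$) and earns $w_{\max}$; if ${\tt ON}$ aborts $p$ to start $q$ (or aborts and restarts $p$, or aborts and idles), set $q_{2i+1}=q_{\max}/2$, so $q$ fails, $p$'s half-transmission is lost under the restart rule and cannot be redone in time, ${\tt ON}$ earns $0$, while ${\tt OPT}$ simply sends $p$ across the two half-quality steps and earns $w_{\min}$.

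Summing up, suppose ${\tt ON}$ ends a gadget with value $w_{\min}$ in $k$ of the $n$ gadgets and value $0$ in the rest; then ${\tt ON}=k\,w_{\min}$ and ${\tt OPT}=k\,w_{\max}+(n-k)\,w_{\min}\ge k\,w_{\max}$, so $({\tt OPT}-\delta)/{\tt ON}\ge w_{\max}/w_{\min}-\delta/(k\,w_{\min})$. If $k=0$ then ${\tt ON}=0<{\tt OPT}$ and ${\tt ON}$ is not competitive at all; otherwise, since ${\tt OPT}\ge n\,w_{\min}\to\infty$ forces $k\to\infty$ for any competitive ${\tt ON}$, the bound tends to $w_{\max}/w_{\min}$, which proves that no online algorithm can be $c$-competitive with $c<w_{\max}/w_{\min}$.

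The step I expect to be the main obstacle is making the second-step case analysis genuinely exhaustive and consistent: I must confirm that $q_{\max}/2$ is an admissible channel state and that $q_{\max}/2+q_{\max}/2$ meets the length requirement $l=q_{\max}$ (both immediate), and, crucially, that \emph{every} way ${\tt ON}$ can react at step $2i+1$ -- continue $p$, abort for $q$, abort and restart $p$, abort and idle -- is covered by one of the two declared values of $q_{2i+1}$, and that in each branch the claimed offline value ($w_{\max}$ via $q$, or $w_{\min}$ via $p$ over both half-quality steps) is actually achievable given release times and deadlines. The remainder is routine bookkeeping: checking non-interference of the gadgets and the additivity of ${\tt OPT}$ across them.
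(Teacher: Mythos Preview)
The paper does not supply its own proof of this theorem; it is quoted from~\cite{LZ09a} and stated without argument, serving only to motivate Assumption~\ref{assumption:1}. Consequently there is no in-paper proof to compare against.

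Your adversary construction is sound and is the natural way to establish the bound. The two-step gadget correctly exploits the fact that, without any knowledge of the current fade state, the algorithm cannot distinguish the branch in which continuing $p$ is right (the next step has full quality, so the optimal schedule ships the heavy packet $q$) from the branch in which aborting is fatal (the next step has half quality, so only the already-started light packet $p$ can be completed). The case split at step $2i+1$ is exhaustive: ``continue $p$'' versus ``anything else'' (abort-for-$q$, abort-and-restart-$p$, abort-and-idle) are the only options, and in the latter three the preemption-restart rule together with $q_{2i+1}=q_{\max}/2$ kills every completion. The gadgets are non-interfering because all deadlines fall at the gadget boundary, so additivity of ${\tt OPT}$ is immediate.

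Two small remarks. First, your concluding limit argument can be tightened: from ${\tt OPT}-c\cdot{\tt ON}=k(w_{\max}-c\,w_{\min})+(n-k)\,w_{\min}$, if $c<w_{\max}/w_{\min}$ then both summands are nonnegative and their sum is at least $\min\{w_{\max}-c\,w_{\min},\,w_{\min}\}\cdot\max\{k,n-k\}\ge \tfrac{n}{2}\min\{w_{\max}-c\,w_{\min},\,w_{\min}\}$, which exceeds any fixed $\delta$ for large $n$; this avoids the separate $k=0$ case. Second, your argument implicitly assumes the algorithm commits at the \emph{start} of a step before learning that step's fade state; this is exactly the regime the theorem addresses (it is what Assumption~\ref{assumption:1} subsequently relaxes), so the interpretation is correct, but it is worth stating explicitly.
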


Based on Theorem~\ref{theorem:counter}, we know that if the fade states are completely unpredictable, without one step of look-ahead, no online algorithm can have a competitive ratio better than $w_{\max} / w_{\min}$. In the following, we consider a practical scenario and make the following assumption that is widely accepted:
\begin{assumption}
\cite{TH98, TSZM07, ZM08} The online algorithms have the ability of looking one-step ahead of knowing the fade states of the channel. At the time when an online algorithm starts to schedule a packet from the current time, this ``committed'' packet can be sent successfully according to future fading states. However, the online algorithm is allowed to preempt-restart this packet later and this packet is not guaranteed to be sent eventually if it is preempted.
\label{assumption:1}
\end{assumption}

Assumption applies to all the variants we consider in the following.

%------------------------------------------------------------------------------

\subsection{Assume both the fade states and the packet input sequence are unknown to the online algorithms}

Two packets are with more interests when neither the fade states nor the input sequence is known to the online algorithm:
\begin{enumerate}
\item $i$: the currently running packet. If $i$ is not available, we simply create a virtual packet $i$ with $w_i = 0$.
\item $h$: the packet with the maximum-value among all pending packets for the transmitter.
\end{enumerate}

From~\cite{LZ09a}, we conclude that always sending the packet with the earliest deadline or $h$ results a competitive ratio arbitrarily large. Here, we employ the following ideas of getting an online algorithm with a better competitive ratio: If the currently sending packet is with a sufficiently large value, then we keep sending it. Otherwise, we let $h$ preempt it. The algorithm we study is called {\tt SEMI-GREEDY}.

\begin{algorithm}
\caption{{\tt SEMI-GREEDY}($\alpha > 1$)}
\begin{algorithmic}[1]

\STATE Let the maximum-value pending packet be $h$, with ties broken in favor of the earliest deadlines. Let the currently being sent packet be $i$. If $h$ (or $i$) does not exist, we set $w_h = 0$ (or $w_i = 0$).

\IF{$w_h \ge \alpha \cdot w_i$}

\STATE Abort $i$ and send $h$.

\ENDIF

\end{algorithmic}
\label{alg:on}
\end{algorithm}

Before we prove the competitive ratio for the algorithm {\tt SEMI-GREEDY}, we define a concept that is useful to the proof.

\begin{definition}
{\bf Packet chains}. We define a packet chain $C$ of $k$ packets as $C := \{p_1, \ p_2, \ p_3, \ \ldots, \ p_k\}$ with the following property ($\alpha > 1$), $w_{p_i} \ \le \ w_{p_{i + 1}} \ / \ \alpha, \ \ \ \forall \ i \ = \ 1, \ 2, \ 3, \ \ldots, \ k - 1$. We use $W(C)$ to represent the total value of the packets of $C$.
\end{definition}

\begin{lemma}
Given a chain $C$ of $k \ge 2$ packets $p_1, \ p_2, \ \ldots, p_k$, we have $W(C) \ \le \ (\frac{1}{\alpha - 1} \cdot (\alpha^{n + 1} - 1) / \alpha^n) \cdot w_{p_k}$.
\label{lemma:chain}
\end{lemma}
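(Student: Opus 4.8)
The plan is to unfold the chain inequality into a closed-form bound on each $w_{p_i}$ and then sum a geometric series. First I would observe that the defining property $w_{p_i} \le w_{p_{i+1}}/\alpha$ can be iterated: applying it $k-i$ times along the chain yields $w_{p_i} \le w_{p_k}/\alpha^{\,k-i}$ for every $i = 1, 2, \ldots, k$, which is a routine downward induction starting from $i = k$ (the base case $i=k$ being trivial, and each step multiplying the bound by an extra factor $1/\alpha$). This reduces the lemma to estimating $W(C) = \sum_{i=1}^{k} w_{p_i} \le w_{p_k}\sum_{i=1}^{k} \alpha^{-(k-i)}$.

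Next I would re-index the sum by $j = k-i$, turning the right-hand side into $w_{p_k}\sum_{j=0}^{k-1}\alpha^{-j}$, a finite geometric series with ratio $1/\alpha < 1$ (here we use the hypothesis $\alpha > 1$). Summing it gives $W(C) \le w_{p_k}\cdot\frac{1-\alpha^{-k}}{1-\alpha^{-1}}$, and then it is just algebra to rewrite $\frac{1-\alpha^{-k}}{1-\alpha^{-1}} = \frac{\alpha(\alpha^{k}-1)}{\alpha^{k}(\alpha-1)} = \frac{1}{\alpha-1}\cdot\frac{\alpha^{k}-1}{\alpha^{k-1}}$, which is precisely the bound in the statement once the exponent $n$ is read as $n = k-1$ (the $n$ in the statement being shorthand for the number of ``steps'' in the chain, i.e. one less than the number of packets).

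The only points requiring care are the indexing convention — matching the statement's $n$ with $k-1$ — and the direction of the iterated inequality, which propagates the bound from the heaviest packet $p_k$ downward to the lighter ones rather than the other way around. I do not anticipate a genuine obstacle: the argument is an elementary telescoping-plus-geometric-sum estimate, and the substantive content is simply that a chain whose weights grow geometrically by a factor of at least $\alpha$ has total weight at most a bounded multiple (at most $\frac{\alpha}{\alpha-1}$, attained in the limit $k\to\infty$) of its largest element, a fact that will later let us charge the adversary's gains in a preempted run against the single packet that survived.
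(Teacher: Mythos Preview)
Your argument is correct and follows the same geometric-series idea as the paper: the paper unrolls the ratio recursively via $W(C)/w_{p_k} \le 1 + \tfrac{1}{\alpha}\cdot W(C\setminus\{p_k\})/w_{p_{k-1}}$, while you bound each $w_{p_i}$ directly by $w_{p_k}/\alpha^{k-i}$ and sum --- the two presentations are equivalent. One small discrepancy worth noting: your sum runs only to $1/\alpha^{k-1}$ and so identifies the statement's $n$ with $k-1$, whereas the paper's unrolling carries one extra term (to $1/\alpha^{k}$) and reads $n=k$; your bound is actually the sharper of the two, and since the downstream application only uses the uniform limit $\alpha/(\alpha-1)$, either reading suffices.
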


\begin{proof}
\begin{eqnarray*}
\frac{W(C)}{w_{p_k}} & = & \frac{\sum^k_{i = 1} w_{p_i}}{w_{p_k}} = \frac{w_{p_1} + w_{p_2} + \cdots + w_{p_{k - 1}} + w_{p_k}}{w_{p_k}} \\
& \le & \frac{w_{p_1} + w_{p_2} + \cdots + w_{p_{k - 1}} + \alpha \cdot w_{p_{k - 1}}}{\alpha \cdot w_{p_{k - 1}}} \\
& = & 1 + \frac{1}{\alpha} \cdot \frac{w_{p_1} + w_{p_2} + \cdots + w_{p_{k - 1}}}{w_{p_{k - 1}}} \\
& \le & \ldots \\
& \le & 1 + \frac{1}{\alpha} + \frac{1}{\alpha^2} + \cdots + \frac{1}{\alpha^{k - 2}} + \frac{1}{\alpha^{k - 1}} + \frac{1}{\alpha^{k}} \\
& = & \frac{1}{\alpha - 1} \cdot (\alpha^{k + 1} - 1) / \alpha^k
\end{eqnarray*}
\end{proof}

\begin{theorem}
The {\tt SEMI-GREEDY} algorithm has a competitive ratio of $\max\{1 + \alpha, \ \frac{1}{\alpha - 1} \cdot (\alpha^{n + 1} - 1) / \alpha^n)\}$. It is ($\phi^2 \approx 2.618$)-competitive when $\alpha = \phi \approx 1.618$.
\label{theorem:general}
\end{theorem}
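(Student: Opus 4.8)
The plan is to use a charging argument that assigns each packet delivered by the adversary {\tt OPT} to a packet delivered by {\tt SEMI-GREEDY}, and then to bound, for each {\tt SEMI-GREEDY}-delivered packet $g$, the total adversary value charged to it by a multiple of $w_g$. First I would fix an input instance $\cal I$, let $\mathcal S$ be the set of packets successfully transmitted (and counted) by {\tt SEMI-GREEDY}, and $\mathcal O$ the set transmitted by {\tt OPT}. I would partition the timeline into the transmission slots used by {\tt SEMI-GREEDY}: at every moment {\tt SEMI-GREEDY} is either transmitting some (possibly virtual, $w=0$) packet or idle, and I claim it is never idle while a pending packet exists, so every time step in which {\tt OPT} transmits a counted packet $o$ is covered by some {\tt SEMI-GREEDY} transmission attempt. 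The charging rule: charge $o\in\mathcal O$ to the packet $g\in\mathcal S$ whose {\em successful} transmission covers the time step at which {\tt OPT} begins sending $o$ — or, if the attempt covering that step was an aborted one, follow the preemption sequence forward to the packet that eventually succeeds (this is well-defined because each abort is triggered by an $h$ with value at least $\alpha$ times larger, and values cannot increase forever within a bounded instance).

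The heart of the argument is the per-packet bound. Fix $g\in\mathcal S$. Two types of adversary packets get charged to $g$. Type (a): packets $o$ that {\tt OPT} sends during a time step that {\tt SEMI-GREEDY} spent on an {\em aborted} attempt that ultimately led (through the preemption chain) to $g$. The sequence of packets $i_1, i_2, \ldots, i_m = g$ in that chain satisfies $w_{i_{j+1}} \ge \alpha\, w_{i_j}$, i.e.\ it is a packet chain in the sense of the Definition, so by Lemma~\ref{lemma:chain} the total value of those packets is at most $\frac{1}{\alpha-1}(\alpha^{n+1}-1)/\alpha^n$ times $w_g$; and the adversary can send at most one distinct counted packet per time step covered, each of value no more than the {\tt SEMI-GREEDY} packet occupying it (otherwise {\tt SEMI-GREEDY} would have preempted with that larger-valued pending packet), giving the chain bound. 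Type (b): the one packet $o$ that {\tt OPT} sends during the time step where $g$ {\em itself} is being successfully transmitted; since $g$ was not preempted by the pending $h$ at that moment, $w_h < \alpha\, w_g$, and $w_o \le w_h$ (as $o$ is pending for {\tt OPT} hence pending for {\tt SEMI-GREEDY} unless already sent by it), so this contributes at most $\alpha\, w_g$. Combining, the value charged to $g$ is at most $\max\{1+\alpha,\ \frac{1}{\alpha-1}(\alpha^{n+1}-1)/\alpha^n\}\cdot w_g$, summing over $g\in\mathcal S$ yields the competitive ratio. Setting $\alpha = \phi$ and using $\phi^2 = \phi + 1$ makes the first term $\phi^2$, and one checks the chain term does not exceed it, giving the claimed $\phi^2 \approx 2.618$.

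The main obstacle I anticipate is making the preemption-chain bookkeeping rigorous: I must argue that the adversary packets charged through aborted attempts really do form a single chain whose values grow geometrically by factor $\alpha$ (no double counting across different chains, and the ``$w_o \le w_{i_j}$'' inequality genuinely holds because every such $o$ is simultaneously pending for {\tt SEMI-GREEDY} at the relevant step), and that the last successful packet $g$ absorbs both the chain of failed attempts below it and the single concurrent {\tt OPT} packet without overlap. I would handle this by carefully defining, for each successful transmission of $g$, the maximal backward chain of attempts it superseded, showing these chains partition all of {\tt SEMI-GREEDY}'s busy time, and checking that an adversary packet is charged exactly once. A secondary subtlety is the role of the deadline: I need that if $o$ is counted by {\tt OPT} at a step covered by {\tt SEMI-GREEDY}'s work, then $o$ was indeed available (released, not past deadline) and hence a legitimate candidate for {\tt SEMI-GREEDY}'s max-value comparison — this follows from Assumption~\ref{assumption:1} (one-step look-ahead on fade states guarantees committed transmissions fit) together with the fact that {\tt SEMI-GREEDY} never idles while work is pending.
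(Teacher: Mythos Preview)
Your overall strategy---a charging argument together with the chain lemma---matches the paper's, but the specific charging rule you chose (assign each adversary packet $o$ to the {\tt SEMI-GREEDY} attempt that is running at the time step where {\tt OPT} \emph{begins} $o$) is different from the paper's, and it introduces a genuine gap.

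The problem is in your Type~(a) step. You assert that an adversary packet $o$ whose start falls inside the aborted attempt $i_j$ satisfies $w_o \le w_{i_j}$, justified by ``otherwise {\tt SEMI-GREEDY} would have preempted.'' But {\tt SEMI-GREEDY} preempts only when the newcomer's value is at least $\alpha\cdot w_{i_j}$; non-preemption therefore gives you only $w_o < \alpha\, w_{i_j}$, not $w_o \le w_{i_j}$. With the weaker inequality your chain bound degrades by a factor of~$\alpha$, and the stated competitive ratio no longer follows. A second, related issue is that your ``at most one distinct counted packet per time step'' does not by itself limit the number of adversary packets per aborted attempt $i_j$ (an attempt spans many steps), so even the pairing of each $o$ with a single $i_j$ needs an argument you have not supplied. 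Finally, the way you combine Type~(a) and Type~(b) into $\max\{1+\alpha,\text{chain bound}\}$ is not spelled out: Type~(b) contributes up to $\alpha\,w_g$ and is not obviously subsumed by the chain sum.

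The paper sidesteps all of this by charging differently: it classifies each adversary packet $p\in\Pi_1$ according to what {\tt SEMI-GREEDY} did with $p$ itself---never ran it, ran and aborted it, or completed it. The key point is that an adversary packet which {\tt SEMI-GREEDY} aborted is \emph{literally one of the chain elements} $i_j$, so Lemma~\ref{lemma:chain} bounds the total value of such packets by $\frac{1}{\alpha-1}\cdot(\alpha^{n+1}-1)/\alpha^{n}$ times $w_g$ with no intermediate comparison needed. Packets the adversary sends that {\tt SEMI-GREEDY} never touched are charged to the \emph{ordinally} corresponding {\tt SEMI-GREEDY} completion interval (the $i$-th completed transmission of each algorithm), and there the inequality $w_{p'}\ge w_p/\alpha$ holds because otherwise $p$ would have preempted $p'$ upon arrival---this is where $1+\alpha$ arises. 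To repair your argument you should either adopt the paper's identity-based classification, or else carefully prove (using the equal-length property and a one-start-per-attempt count) that each adversary start during $i_j$ is in fact bounded by $w_{i_{j+1}}$ rather than $w_{i_j}$, and redo the summation accordingly.
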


\begin{proof}
We use a charging scheme to prove Theorem~\ref{theorem:general}. The idea is: For the packets the adversary sends, we charge them into different time intervals and we prove that in each pair of corresponding intervals, the value we charge to the adversary in that interval is no more than $\max\{1 + \alpha, \ \frac{1}{\alpha - 1} \cdot (\alpha^{n + 1} - 1) / \alpha^n)\}$ times of what {\tt SEMI-GREEDY} achieves.

Let the subset of packets chosen by the adversary (that is, an optimal offline algorithm) (respectively, {\tt SEMI-GREEDY}) be $\Pi_1$ (respectively, $\Pi_2$). Without loss of generality, we assume the adversary sends packets in a {\em canonical order}, i.e., for any two pending packets $p_i$ and $p_j$, the adversary sends the packet with an earlier deadline. We are going to prove that
\begin{displaymath}
\frac{\sum_{p_j \in \Pi_1} w_{p_j}}{\sum_{p_i \in \Pi_2} w_{p_i}} \le \max\{1 + \alpha, \ \frac{1}{\alpha - 1} \cdot (\alpha^{n + 1} - 1) / \alpha^n)\}.
\end{displaymath}

The proof depends on the following two observations:
\begin{enumerate}
\item Given a set of packets $S$ at time $t$, we assume an online algorithm schedules a packet $p_i \in S$. We consider time $t' > t$. Since all packets are with the same length, if the packet $p_i$ cannot be finished by time $t'$, any packet in $S$ cannot be finished completely by time $t'$, no matter what the fade states of the channel has.

\item Given a set of packets $S$ at time $t$, we assume that the {\tt SEMI-GREEDY} algorithm schedules a packet $p_i \in S$. We have $w_{p_i} \ge \max_{p_j \in S} w_{p_j} / \alpha$.

    If we assume $p_i$ is aborted at time $t' > t$ by a packet $p_k$, we have $w_{p_i} < w_{p_k} / \alpha$ and $p_k \notin S$. If the preempting packet $p_k$ is not sent by the algorithm {\tt SEMI-GREEDY}, $p_k$ must be aborted by another packet which has the potential of being sent. So on and so forth, we regard all aborted packets and the last-sent packet $p_l$ as a chain. From Lemma~\ref{lemma:chain}, all ever-aborted packets have value $\le w_{p_l} \cdot \frac{1}{\alpha - 1} \cdot (\alpha^{n + 1} - 1) / \alpha^n$.

    Note that no chains share a same packet, since each preempted packet and its preempting packet are in the same chain.
\end{enumerate}

For any packet $p \in (\Pi_1 \setminus \Pi_2)$ sent only by the optimal offline algorithm, either $p$ expires before {\tt SEMI-GREEDY} sends it or $p$ was sent, {\tt SEMI-GREEDY} aborted $p$ before $p$ could be finished, and $p$ is never completed by its deadline. If $p$ expires, any packet that {\tt SEMI-GREEDY} sends since time $r_p$ has a value $\ge w_p / \alpha$ (from the algorithm). For each time interval in which a single packet is sent, we examine it for both the optimal offline algorithm and this online {\tt SEMI-GREEDY} algorithm in a sequential order. Our charging scheme works as follows:
\begin{enumerate}
\item For any packet $p \in (\Pi_1 \setminus \Pi_2)$ that {\tt SEMI-GREEDY} has not ever run, we charge it to the corresponding time interval that {\tt SEMI-GREEDY} sends a packet. We note that {\tt SEMI-GREEDY} must have one pending packet to send in this interval since this packet $p$ is a candidate. The packet {\tt SEMI-GREEDY} sends, let it be $p'$, in this corresponding interval has a value no less than $w_p / \alpha$. Also, {\tt SEMI-GREEDY} finishes $p'$ no later than the adversary finishes $p$ since $p$ and $p'$ have the same processing time and $p$ and $p'$ are being executed in corresponding time intervals when both algorithms send packets.

\item For any packet $p \in (\Pi_1 \setminus \Pi_2)$ that {\tt SEMI-GREEDY} ever sends but aborts it later, we know that (from above observations) $p$ belongs uniquely to a chain and the last element of this chain, say $p'$, is sent by {\tt SEMI-GREEDY}. Thus, we charge $w_p$ to the time interval when $p'$ is sent by {\tt SEMI-GREEDY}.

\item For any packet $p \in (\Pi_1 \cap \Pi_2)$, we charge $w_p$ to the time interval when {\tt SEMI-GREEDY} sends $p$. Clearly, for any packet acting as the last-element of a chain, this charging scheme results that the value ratio is bounded by $\frac{1}{\alpha - 1} \cdot (\alpha^{n + 1} - 1) / \alpha^n$ (see Lemma~\ref{lemma:chain}).
\end{enumerate}

The remaining part of the proof is to argue that when we charge a packet $p \in (\Pi_1 \setminus \Pi_2)$ that {\tt SEMI-GREEDY} has not ever run yet in the corresponding time interval, {\tt SEMI-GREEDY} sends a packet $p'$, $w_{p'} \ge w_p / \alpha$. This claim is easy to prove since if $w_{p'} < w_p / \alpha$, $p'$ will be aborted by $p$ immediately at the time when $p$ arrives.  Thus, for each packet $p$ that {\tt SEMI-GREEDY} sends, the charged value to $p$ for the adversary is bounded by $1 + \alpha$ and $\frac{1}{\alpha - 1} \cdot (\alpha^{n + 1} - 1) / \alpha^n$ and all packet that the adversary sends have been charged. Theorem~\ref{theorem:general} is proved.
\end{proof}

Closing or shrinking the gap $[2, \ 2.618]$ for deterministic online algorithms still remains as an open problem.

%------------------------------------------------------------------------------

\subsection{Assume the fade states are known to the online algorithms, but the packet input sequence is unknown}

Now we consider a variant in which the fade states are known beforehand, but the packet input sequence is unknown. To illustrate the challenge, we present an instance in which packets are with the same value and the fade state of the channel is fixed at $q_t = l / 2, \forall t$. Consider one packet $p_1$ with deadline $5$ at time $1$. If an online algorithm executes $p_1$, the adversary releases another packet $p_2$ with deadline $3$ at time $2$. So, the online algorithm cannot finish both jobs and the competitive ratio is $2$, given the adversary finishes both packets in the order of $p_2$ and $p_1$. If the online algorithm aborts $p_1$ but executes $p_2$ at time $2$, the adversary releases another packet $p_3$ at time $2$ with deadline $4$. Here, the online algorithm cannot finish both $p_2$ and $p_3$, but the adversary can finish $p_1$ and $p_3$ by their deadlines in order. Thus, the lower bound of competitive ratios for this variant ($w_{p_i} = 1$, $\forall i$ and fade states keep constant $1 / 2$) is $2$. It is intuitive to abort a running packet if it can be sent later with respect to the given set of pending packets and fade states of the channel. Our proposed online algorithms are based on this intuition. In order to check if a set of packets can be delivered successfully, we define
\begin{definition}
{\bf Provisional schedule}~\cite{CJST07a, EW07}. At any time $t$, a {\em provisional schedule} ${\bf S}_t$ is a schedule for the pending packets at time $t$ (assuming no new arriving packets). This schedule specifies the set of packets to be transmitted, and for each it specifies the delivery time. An optimal provisional schedule is the one achieving the maximum total value of packets among all provisional schedules.
\end{definition}

In the following, we provide a modified earliest-deadline-first algorithm called $\tt EDF_{\beta}$. Since the fade states are known, there exists an efficient algorithm in calculating an optimal provisional schedule for time $t$ (see~\cite{LZ09a}). We are interested in two packets in this provisional schedule: the earliest-deadline pending packet $e$ and the packet $h$ with the maximum value. We either schedule $e$ (if $e$ is with a sufficiently large value) or another packet $f$ satisfying $w_f \ge \max\{\beta \cdot w_e, \ w_h / \beta\}$.

\begin{algorithm}
\caption{$\tt EDF_{\beta}$}
\begin{algorithmic}[1]

\STATE Abort the currently running packet $i$ only if the new arrival with value $\ge \beta \cdot w_i$, ties are broken in favor of the packet with the earliest deadline.

\IF{there is no currently running packet}

\STATE Calculate the optimal provisional schedule, based on the set of pending packets and the known fade states.

\IF{$w_e \ge w_h / \beta$}

\STATE Execute $e$.

\ELSE

\STATE Execute a packet $f$ satisfying
\begin{displaymath}
w_f \ge \max\{\beta \cdot w_e, \ w_h / \beta\}.
\end{displaymath}
where ties are broken in favor of the earliest-deadline packet. Note $h$ itself is a candidate for $f$.

\ENDIF

\ENDIF

\end{algorithmic}
\label{alg:random}
\end{algorithm}

\begin{theorem}
Assume fade states are known to online algorithms. Algorithm ${\tt EDF}_{\beta}$ is $\max\{2, \ \beta, \ (\frac{1}{\beta - 1} \cdot (\beta^{n + 1} - 1) / \beta^n)\}$-competitive, and it is $2$-competitive when $\beta = 2$.
\label{theorem:bare}
\end{theorem}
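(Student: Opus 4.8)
The plan is to re-run the charging argument from the proof of Theorem~\ref{theorem:general}, now exploiting the extra information that the fade states are known and an optimal provisional schedule is available whenever ${\tt EDF}_\beta$ is idle. Let $\Pi_1$ and $\Pi_2$ be the sets of packets delivered by the adversary and by ${\tt EDF}_\beta$, respectively, and assume (without loss of generality) the adversary transmits in canonical earliest-deadline order. I would walk through the transmission intervals of the two algorithms in parallel sequential order and charge every packet of $\Pi_1$ to an interval in which ${\tt EDF}_\beta$ transmits some packet $p'$, with the target that the total value charged to such an interval never exceeds $\max\{2,\ \beta,\ \frac{1}{\beta-1}(\beta^{n+1}-1)/\beta^n\}\cdot w_{p'}$.

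First I would record the structural facts the charging rests on. (a)~Same-length packets give feasibility monotonicity (observation~1 of Theorem~\ref{theorem:general}): if the packet ${\tt EDF}_\beta$ runs at time $t$ cannot be completed by $t'$, no packet pending at $t$ can be either; in particular a packet already pending when ${\tt EDF}_\beta$ commits to a transmission whose completion falls after that packet's deadline is infeasible for everyone and absent from $\Pi_1$. (b)~The $\beta$-factor preemption rule organizes every ever-aborted packet into a unique chain whose last element is delivered, so Lemma~\ref{lemma:chain} bounds the aborted packets of a chain ending in $p'$ by $\frac{1}{\beta-1}(\beta^{n+1}-1)/\beta^n\cdot w_{p'}$; moreover, since ${\tt EDF}_\beta$ only ever starts $e$ or an $f$ with $w_f\ge w_h/\beta$ and only preempts for a $\beta$-factor gain, the packet it transmits at any instant has value at least $1/\beta$ times that of every then-pending packet. (c)~Because the fades are known, ${\tt EDF}_\beta$ never idles while a pending packet is still feasible, and when it starts a transmission it starts the earliest-deadline head of an \emph{optimal} provisional schedule unless it knowingly substitutes a much more valuable $f$.

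The charge then partitions $\Pi_1$ exactly as in Theorem~\ref{theorem:general}: packets in $\Pi_1\cap\Pi_2$ are charged to ${\tt EDF}_\beta$'s own transmission of them; packets of $\Pi_1\setminus\Pi_2$ that ${\tt EDF}_\beta$ ever started are charged to the delivered last element of their unique preemption chain (bounded by Lemma~\ref{lemma:chain}); and a packet $p\in\Pi_1\setminus\Pi_2$ that ${\tt EDF}_\beta$ never started must, by~(c), have expired while ${\tt EDF}_\beta$ was continuously transmitting over $[r_p,d_p)$, so by~(b) every packet transmitted in that window---including the one we assign $p$ to via the sequential interval correspondence and the adversary's canonical order---has value $\ge w_p/\beta$. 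The conclusion follows by showing that an interval receiving such a deadline-displacement charge receives no further chain charge beyond its own last-element term, so its total is at most $\max\{2,\beta\}\cdot w_{p'}$, whereas a chain-ending interval is bounded by Lemma~\ref{lemma:chain}; the overall ratio is the maximum of the two, and setting $\beta=2$ collapses all three quantities to $2$.

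The step I expect to be the real obstacle is precisely this ``non-stacking''. An adversary can release a packet with a deadline too tight for the currently committed ${\tt EDF}_\beta$ transmission to make room for, yet with value just below the $\beta$-factor preemption threshold, so a naive accounting charges both that packet and the committed packet's own weight to the same interval and only yields $(1+\beta)\,w_{p'}$. Escaping this requires using the optimality of the provisional schedule and the earliest-deadline preference to argue that the displaced adversary packet is effectively ``one transmission behind'' ${\tt EDF}_\beta$'s committed work---the classical earliest-deadline factor-$2$ phenomenon already visible in the lower-bound instance preceding Algorithm~\ref{alg:random}---and should therefore be matched to a different interval than the one carrying the chain cost. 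Carrying out that matching rigorously under the preemption-restart model and variable (fade-dependent) transmission lengths is the delicate part; everything else is the geometric-series bookkeeping already done in Lemma~\ref{lemma:chain}.
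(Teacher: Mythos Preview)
Your proposal takes a genuinely different route from the paper. You re-run the \emph{charging} argument of Theorem~\ref{theorem:general} and try to sharpen the $(1+\alpha)$ term to $\max\{2,\beta\}$ by forbidding the deadline-displacement charge and the chain charge from stacking on the same interval. The paper instead switches to a \emph{potential function / loop invariant} argument: it maintains the invariant that the adversary's pending set is a subset of ${\tt EDF}_\beta$'s pending set, and then compares the $t$-th delivery of each algorithm directly. With that invariant in hand, the adversary's current packet $p'$ is always pending for ${\tt EDF}_\beta$, so one can case-split on $d_{p'}$ versus $d_p$ and invoke the key property of the optimal provisional schedule (``if $d_{p'}<d_p$ and $p$ is in the schedule but $p'$ is not, then $w_{p'}\le w_p$'') to get the factor $2$; when $d_{p'}>d_p$ the adversary forfeits $p$ and only $w_{p'}\le w_h\le\beta\,w_p$ is charged. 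The chain case is handled as you do, via Lemma~\ref{lemma:chain}.

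The gap in your proposal is exactly the one you flag yourself. Your ``non-stacking'' claim---that an interval carrying a deadline-displacement charge cannot also carry a nontrivial chain charge---is not established, and the sketch you give (match the displaced packet to a different interval using the earliest-deadline structure) is where the real work lies. The paper's invariant ${\cal P}'_t\subseteq{\cal P}_t$ is precisely what lets it avoid this difficulty: it never has to argue about which interval a lost packet belongs to, because it compares delivery-by-delivery and uses the provisional-schedule optimality to bound $w_{p'}$ against $w_p$ in each case separately. If you want to push the charging route through, you will need either that subset invariant (which you have not stated) or an explicit injective matching of displaced adversary packets to ${\tt EDF}_\beta$ intervals that provably never coincides with chain endpoints; without one of these, the argument as written only recovers $1+\beta$, not $\max\{2,\beta\}$.
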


\begin{proof}
We use a potential function method and a loop invariant method to prove Theorem~\ref{theorem:bare}. We compare our algorithm ${\tt EDF}_{\beta}$ with the adversary {\tt ADV}. Let $\Phi^{\tt ADV}_t$ and $\Phi^{\tt EDF}_t$ denote the potentials of the adversary and ${\tt EDF}_{\beta}$ respectively. Specifically, $\Phi^{\tt ADV}_t$ denotes the total value achieved since time $t$ from the pending packets at time $t$ for the adversary. Let this set of packets be $S^*_t$. Let $\Phi^{\tt EDF}_t$ denote the total value of the optimal petitional schedule of the pending packets at time $t$ for ${\tt EDF}_{\beta}$. We use $p_t$ and $p'_t$ to denote the $t$-th packet sent by ${\tt EDF}_{\beta}$ and {\tt ADV} respectively. If such a packet does not exist, $p_t$ ($p'_t$) is a null packet with value $0$. To prove Theorem~\ref{theorem:bare}, we need to show that for any $t$, $c \cdot w_{p_t} + \Delta \Phi^{\tt EDF}_t \ge w_{p'_t} + \Delta \Phi^{\tt ADV}_t$, where $c := \max\{2, \ \beta, \ (\frac{1}{\beta - 1} \cdot (\beta^{n + 1} - 1) / \beta^n)\}$. We provide the following loop invariants and prove their correctness by case study.

\begin{itemize}
\item Denote the pending packets at time $t$ for {\tt ADV} and ${\tt EDF}_{\beta}$ as ${\cal P}'_t$ and ${\cal P}_t$. ${\cal P}'_t \subseteq {\cal P}_t$. Note that ${\tt EDF}_{\beta}$ may not deliver all packets in ${\cal P}_t$.

\item When a packet is sent, the sum of the actual gain and the credit charge (see below) is called {\em amortized gain}. We prove that for the $i$-th packet sent, {\tt ADV}'s amortized gain is no more than $c$ times of ${\tt EDF}_{\beta}$'s amortized gain.
\end{itemize}

For arrivals, with the first invariant, the invariants are easy to prove. Note $w_{p_t} = w_{p'_t} = 0$. In the following, we consider packet deliveries only. Let the packet ${\tt EDF}_{\beta}$ chooses to send in this time interval be $p$. One fact that we will use in the proof is: Given two packet $p$ and a packet $p^*$ with $d_p \le d_{p^*}$, if $p$ is not in the optimal provisional schedule, but $p^*$ is, then $w_{p^*} \ge w_p$. This fact further implies that if $p$ is the packet ${\tt EDF}_{\beta}$ is currently sending, any packet not in the optimal provisional schedule has a value $\le \beta \cdot w_p$.

\begin{enumerate}

\item Assume {\tt ADV} sends a packet $p'$. Assume $p$ is sent successfully by ${\tt EDF}_{\beta}$.

Based on the invariants, $w_{p'}, \ w_p \ \le \ w_h$. From the algorithm itself, $w_p \ge w_h / \beta$. Since all packets have the same length, under any fade states, ${\tt EDF}_{\beta}$ finishes $p$ no later than {\tt ADV} finishes $p'$.

If $d_{p'} < d_p$, we have $w_{p'} < w_p$ in the optimal provisional schedule. Then we charge $w_{p'} + w_p$ to the adversary and we have $w_{p'} + w_p \le 2 \cdot w_p$. If $d_{p'} > d_p$, $p$ will not be sent by the adversary. Then we charge $w_{p'}$ to {\tt ADV} and we have $\beta \cdot w_p \ge w_h \ge w_{p'}$.

\item Assume {\tt ADV} sends a packet $p'$. Assume $p$ is aborted by ${\tt EDF}_{\beta}$ before it is finished.

If the adversary will send $p$, we will charge $w_p$ to the packet that preempts it. Like the chain we have calculated in Lemma~\ref{lemma:chain}, the value gained by sending the last packet of the chain is at least $(\beta - 1) \cdot \beta^n / (\beta^{n + 1} - 1)$ times of the total value we charge for the adversary.

\item Assume {\tt ADV} has nothing to send from the currently pending packets for ${\tt EDF}_{\beta}$.

We claim that either $p$ has been sent by {\tt ADV} or {\tt ADV} must have one new arrival before ${\tt EDF}_{\beta}$ finishes the packet $p$ it chooses to send. Otherwise, {\tt ADV} can get more credit by delivering $p$. It does not hurt if we have run $p$ till new arrivals come. This analysis is similar to what we have had in above cases.
\end{enumerate}

\end{proof}

Theorem~\ref{theorem:bare} implies that extra information (for example, the known fade states) helps improve competitive ratio.

%------------------------------------------------------------------------------

\subsection{Assume the fade states are unknown, but the packet input sequence is known}

We first provide the lower bound $\phi \approx 1.618$ of competitive ratio for deterministic online algorithms for this variant. Then we discuss the relation between this model and another well-studied online problem.

\begin{theorem}
Consider a variant in which the fade states are unknown, but the packet input sequence is known to online algorithms. The lower bound of competitive ratio for deterministic online algorithms is $\phi \approx 1.618$.
\end{theorem}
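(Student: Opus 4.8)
The plan is to construct an explicit adversarial input (with a known packet sequence but an unknown fade trajectory) that forces any deterministic online algorithm to lose a factor of at least $\phi \approx 1.618$. The key leverage is that although the online algorithm knows all packets up front, it must commit to transmitting \emph{some} packet over the next time step(s) without knowing whether the channel will stay good long enough to finish that packet or collapse (go to $q_{\min}=0$) right after. So the adversary's only freedom is the realized fade sequence, and it will choose that sequence after seeing which packet the online algorithm commits to.

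First I would set up a small gadget: at time $0$ release a low-value packet $p_1$ with value $1$ and a tight deadline, together with a high-value packet $p_2$ with value $\phi$ whose deadline is slightly later. Suppose the channel is currently good enough that (at the start of the step) exactly one packet can be ``committed.'' I would argue by cases on the online algorithm's choice. If the online algorithm commits to the high-value $p_2$, the adversary makes the channel quality drop so that $p_2$ cannot in fact complete before its deadline (the preemption-restart setting allows abandoning it, but then it is too late), while under a \emph{different} realization the offline optimum would have sent the cheap-but-fast $p_1$ first and then still had time for $p_2$ — wait, this needs care, since the offline algorithm also faces one fixed fade sequence. The correct framing: the adversary fixes the fade sequence \emph{as a function of the online choice}, so for the comparison we take, for each online choice, the best offline value \emph{on that same fixed sequence}. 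If the online algorithm commits to $p_1$, the adversary keeps the channel good, so offline could have completed both $p_2$ and $p_1$ but online, having burned the step on $p_1$ and restarting $p_2$, completes only $\phi$ out of $1+\phi$; ratio $(1+\phi)/\phi = \phi$. If instead the online algorithm commits to $p_2$, the adversary degrades the channel right after the commitment so $p_2$ cannot finish and only the short $p_1$ (value $1$) fits in the remaining window for online, whereas offline on that degraded sequence finishes exactly the one packet that fits, say value... — here I would tune the values/deadlines so that in \emph{this} branch offline gets value $\phi$ (it picks $p_2$ and the degradation still lets it complete, because offline schedules $p_2$ starting at a moment the online algorithm could not have known was safe), giving ratio $\phi/1 = \phi$ again. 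The recursion/iteration of such gadgets (as in the $q_t = l/2$ example already in the paper) is what pins the bound exactly at $\phi$ rather than something smaller; I would chain $k$ copies and let $k \to \infty$, with the geometric sum of values $1, \phi, \phi^2, \dots$ mirroring Lemma~\ref{lemma:chain}.

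The main obstacle is getting the two branches to be \emph{simultaneously} tight and consistent: because the fade sequence is what the adversary controls, I must ensure that in each branch the online algorithm's best possible completion (over all its future preempt-restart decisions on the \emph{revealed} fade sequence) is genuinely limited, while the offline optimum on that same sequence genuinely achieves the larger value. Concretely, the delicate point is that ``one-step lookahead'' (Assumption~\ref{assumption:1}) guarantees a committed packet \emph{can} be sent successfully — so the adversary cannot simply kill a packet the instant it is committed; the degradation must be scheduled so that it only bites a packet the online algorithm would be forced to commit to \emph{before} the lookahead reveals the bad news, while leaving a legal offline schedule that exploits a different commitment order. Threading this — choosing deadlines, the packet lengths' relation to $q_{\min},q_{\max}$, and the timing of the fade drops so that both the online upper bound and the offline lower bound hold on one common realization in each branch — is where the real work lies; the value bookkeeping and the limit argument are then routine.
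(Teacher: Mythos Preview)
Your two-packet gadget with values $1$ and $\phi$ is exactly the right instrument, and the paper's proof uses nothing more than that --- no chaining, no limit argument. But your case analysis has the adversary's fade responses inverted, and that is why you run into trouble with Assumption~\ref{assumption:1}.

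The correct pairing is this. If the online algorithm commits to the \emph{expensive} packet $p_2$ first, the adversary does \emph{not} degrade the channel; it keeps the channel moderate (in the paper, $q_t=0.5$ throughout) so that both packets could have been sent, but only in the earliest-deadline-first order $p_1,p_2$. Online, having spent the first slot on $p_2$, finishes $p_2$ (Assumption~\ref{assumption:1} is respected) but $p_1$ has expired; online gets $\phi$, offline gets $1+\phi$, ratio $\phi$. Conversely, if the online algorithm commits to the \emph{cheap} packet $p_1$ first, the adversary arranges the fade so that after $p_1$ completes there is not enough remaining capacity before $d_{p_2}$; online gets $1$, while offline on that same sequence would have sent $p_2$ alone, getting $\phi$, ratio $\phi$ again.

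Your version has the online-$p_2$ branch handled by degrading the channel to kill $p_2$, which (as you yourself note) collides with the one-step-lookahead guarantee. The fix is simply to flip the responses: good channel punishes the greedy choice, tight channel punishes the EDF choice. Once you see that, the obstacle you flag in your final paragraph disappears, and the geometric chain $1,\phi,\phi^2,\ldots$ is unnecessary --- the single two-packet instance already attains the ratio $\phi$ exactly.
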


\begin{proof}
Such an instance is easy to construct. Assume there are two packets in the input sequence only. One packet $p_1$ is with value $1$ and deadline $2$. The other packet $p_2$ is with value $\phi$ and deadline $3$. These two packets are released at time $0$. Let an online algorithm be {\tt ON}.

If {\tt ON} schedules $p_1$, the optimal offline algorithm schedules $p_2$ and the fade states are $0.5$ from time $0$ to $3$. Note here the Assumption~\ref{assumption:1} still holds. Then the competitive ratio is $\phi$. If {\tt ON} schedules $p_2$, the optimal offline algorithm schedules both $p_1$ and $p_2$ given the fading states are $0.5$ from $0$ to $4$. Thus, the competitive ratio is $(1 + \phi) / \phi = \phi$.
\end{proof}

In the following, we reveal the relationship between this variant and a well-studied model called the {\em bounded-delay model} (see~\cite{EW07, LSS07} and the references therein). In the bounded-delay model, packets are released in an online manner. Each packet is associated with a value and a deadline by which it should be sent. In each time step, a packet can be sent and the goal is to maximize the total value of the packets sent by their respective deadlines.

\begin{theorem}
Assume the fade states are unknown, but the packet input sequence is known to online algorithms. A $c$-competitive algorithm for the bounded-delay model implies a $c$-competitive algorithm for our model.
\label{theorem:3}
\end{theorem}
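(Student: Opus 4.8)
The plan is a black-box reduction. Given any deterministic $c$-competitive algorithm $\mathcal A$ for the bounded-delay model, I build an online algorithm $\mathcal B$ for the present variant (fade states revealed online, packet sequence known in advance) and argue it is $c$-competitive. The key structural idea is to \emph{identify one packet transmission in the fading channel with one unit time step of the bounded-delay model}: since all packets have the common length $l$, once a maximal busy stretch of transmissions starts at real time $a$ the completion times $a=\rho_0<\rho_1<\rho_2<\cdots$, with $\rho_j=\min\{s:\sum_{t=\rho_{j-1}+1}^{s}q_t\ge l\}$, are fixed by the fade states alone and are independent of which packet is placed in which slot. I translate the known packet set $P$ into a bounded-delay instance $P'$ on the same packets and values, giving each $p$ the bounded-delay deadline $d'_p:=\max\{j:\rho_j\le d_p\}$ and a bounded-delay release time equal to the index of the slot containing $r_p$ (packets with $\sum_{t=r_p+1}^{d_p}q_t<l$ cannot be delivered by anyone and are dropped).

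$\mathcal B$ carries out this translation on the fly. It runs $\mathcal A$ internally and lets the $k$-th transmission it \emph{completes} play the role of bounded-delay step $k$. Using the one-step look-ahead of Assumption~\ref{assumption:1}, at the instant $\mathcal B$ must commit to the packet of slot $k$ it already knows the completion time $\rho_k$, hence exactly which pending packets are still deliverable in that slot (those $p$ with $r_p$ early enough, not yet sent, and $\rho_k\le d_p$); it declares the remaining pending packets expired to $\mathcal A$ just before $\mathcal A$'s step-$k$ choice. Under the harmless convention that $\mathcal A$ never picks an already-expired packet, this stream of releases and expirations is a legitimate online presentation of the fixed instance $P'$, so $\mathcal B$ can simply transmit in slot $k$ whatever $\mathcal A$ transmits in step $k$; by the definition of $d'_p$ that transmission meets $p$'s true deadline, whence $\mathcal B$'s schedule is feasible in the fading model and $\mathcal B(P)=\mathcal A(P')$.

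It then remains to show $\mathrm{OPT}_{\mathrm{fading}}(P)\le\mathrm{OPT}_{\mathrm{BD}}(P')$. With equal-length packets a preempt-restart offline scheduler gains nothing from preemption, so I take the optimal fading schedule to be non-preemptive, work-conserving and earliest-deadline-first, and compress its transmissions onto the grid $\{\rho_j\}$ by a routine exchange argument; placing the packet it delivers in slot $k$ into step $k$ of $P'$ is legal (the transmission ends at $\rho_k\le d_p$ so $k\le d'_p$, and it starts at $\rho_{k-1}\ge r_p$ so $k$ is at least $p$'s translated release) and gives a bounded-delay schedule of the same value. Combining the two facts with $c$-competitiveness of $\mathcal A$ yields $\mathcal B(P)=\mathcal A(P')\ge(\mathrm{OPT}_{\mathrm{BD}}(P')-\delta)/c\ge(\mathrm{OPT}_{\mathrm{fading}}(P)-\delta)/c$, with the additive term carried through unchanged.

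The step I expect to be the real obstacle is making the slot-based translation \emph{lossless}, that is, guaranteeing that $P'$ never undervalues $\mathrm{OPT}_{\mathrm{fading}}(P)$. Because the channel is non-uniform and packets can arrive in the middle of a transmission, the only window in which a late, high-value packet can be completed by its deadline may begin \emph{inside} the incumbent transmission and thus fall out of phase with any naive grid; handling this forces $\mathcal B$ to exploit its preempt-restart power and forces $\{\rho_j\}$ and the translated parameters $d'_p$ to be maintained adaptively rather than precomputed. One must then verify that the compression of the optimal fading schedule still lands on $\mathcal B$'s (adaptive) grid without loss, and that everything stays consistent across the genuine idle gaps that split the instance into independent sub-instances, each carrying its own grid. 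Once the translation is pinned down to be lossless, the competitiveness accounting in the previous paragraph is immediate.
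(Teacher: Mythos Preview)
Your reduction runs in the direction the theorem statement asks for (from a bounded-delay algorithm $\mathcal A$ to a fading-channel algorithm $\mathcal B$), whereas the paper argues in the opposite direction: it starts from an arbitrary bounded-delay instance, manufactures fade states so that exactly one packet fits between consecutive optimal deadlines, and then observes that under Assumption~\ref{assumption:1} the online player in the fading model faces precisely the bounded-delay decision at each commitment. That construction embeds bounded-delay instances into the fading variant; it is really a hardness transfer rather than the black-box simulation you attempt.

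On its own merits, your proposal has a concrete gap at the simulation step. A bounded-delay algorithm is entitled to see each packet's numerical deadline at the moment the packet is released; the known competitive algorithms for that model (EDF-based rules, the $1.832$-competitive algorithm, and so on) use the actual deadlines of pending packets to prioritize, not merely the alive/expired bit. In your scheme the translated deadline $d'_p=\max\{j:\rho_j\le d_p\}$ depends on $\rho_j$'s that in turn depend on fade states not yet revealed, so when you ``release'' $p$ to $\mathcal A$ you cannot hand it $d'_p$. Telling $\mathcal A$ only which packets have just expired before step $k$ is not a legitimate online presentation of a fixed bounded-delay instance $P'$; it is a strictly weaker information model, and $\mathcal A$'s $c$-competitiveness guarantee does not apply to it. The obstacle you flag in your last paragraph---keeping the slot grid and the translated parameters consistent while fade states are revealed adaptively---is exactly this issue, and it is not a clean-up detail: without future fade states there is no fixed instance $P'$ on which to invoke $\mathcal A$ as a black box, so the equalities $\mathcal B(P)=\mathcal A(P')$ and $\mathrm{OPT}_{\mathrm{fading}}(P)\le\mathrm{OPT}_{\mathrm{BD}}(P')$ are comparing $\mathcal B$ to a moving target.
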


\begin{proof}
Consider an input sequence $\cal I$ for the bounded-delay model. Let the packets sent by an optimal offline algorithm be $\cal O$ and the algorithm be ${\tt OPT}_d$.

Given a time $t$, we create the fade states such that the optimal offline algorithm ${\tt OPT}_f$ for this variant achieves the same weighted throughput as ${\tt OPT}_d$, also, for an online algorithm, the extra given information about the whole input sequence cannot avoid the difficulty brought by the unpredictable fade states. The construction of fade states is as follows.

For the bounded-delay model, let the set of packets ${\cal O}$ be $p_1, \ p_2, \ \ldots, \ p_m$ and they are sent in time steps $1, \ 2, \ \ldots, \ m$ respectively. (If there is no packet sent in a step $i$, we create a dummy packet $p_i$ for step $i$ with $w_{p_i} = 0$.) Without loss of generality, all packets $p_i$ can be sent in the earliest-deadline-first manner. Then we modify the deadlines of the packets in $\cal O$ such that $d_{p_i} < \min\{d_{p_{i + 1}}, \ \ldots, \ d_{p_m}\}$, for all $i = 1, 2, \ldots, m - 1$. At last, we force the quality of the fade states from time $d_{p_i}$ to $d_{p_{i + 1}}$ be $l / (d_{p_{i + 1}} - d_{p_i})$. This guarantees a packet can be sent under such fade states and if $p_i$ is pending to an online algorithm at time $d_{p_{i - 1}}$ and the online algorithm sends any other packet than $p_i$, $p_i$ cannot be sent by the online algorithm any more. We ensure that the optimal offline algorithm for this variant works the same as the optimal offline algorithm for the bounded-delay model. Also, the extra information about the packet input sequence does not help the online algorithm since it has no knowledge about the fade states. With Assumption~\ref{assumption:1}, the online algorithm known that only one packet can be sent once it is committed and this is exactly as what is assumed in the bounded-delay model.
\end{proof}

Closing or shrink the gap of competitive ratios $[1.618, \ 1.832]$ for the bounded-delay model is an intriguing problem and thus, from Theorem~\ref{theorem:3}, the gap still applies to the variant in which the fade states are unknown, but the packet input sequence is known to online algorithms.

%------------------------------------------------------------------------------

\bibliographystyle{plain}
\bibliography{buffer}

%------------------------------------------------------------------------------

\end{document}